\newtheorem{theorem}{Theorem}
\newtheorem{proposition}{Proposition}
\newtheorem{definition}{Definition\rm}
\newtheorem{remark}{Remark}
\theoremstyle{definition}
\newtheorem{example}{Example}
\newcommand{\PP}{{\mathcal P}}
\newcommand{\BB}{{\mathfrak B}}
\newcommand{\B}{{\mathcal B}}
\newcommand{\IR}{{\mathrm{I\!R}}}
\newcommand{\C}{{\mathcal C}}
\newcommand{\IN}{{\mathrm{I\!N}}}
\newcommand{\TT}{{\mathcal T}}
\newcommand{\co}{{/\!/}}
\begin{document}

\title{On the core of normal form games with a continuum of players~: a correction\footnote{This is an amended version of the original paper with the same title published in : \emph{Mathematical Social Sciences 89(2017), 32-42.}
}}
\author{Y. Askoura}
%\date{}
\maketitle

{{LEMMA, Universit\'e Paris I\!I, Panth\'eon-Assas, 4 rue Blaise Desgoffe, 75006 Paris, France. email~: {youcef.askoura@u-paris2.fr}}}

\begin{abstract}
We study the core of normal form games with a continuum of players and without side payments. We consider the weak-core concept, which is an approximation of the core, introduced by Weber, Shapley and Shubik. For payoffs depending on the players' strategy profile, we prove that the weak-core is nonempty. The existence result establishes a weak-core element as a limit of elements in {weak-cores} of appropriate finite games.
We establish by examples that our regularity hypotheses are relevant in the continuum case and the weak-core can be strictly larger than the Aumann's $\alpha$-core. For games where payoffs depend on the distribution of players' strategy profile, we prove that analogous regularity conditions ensuring the existence of pure strategy Nash equilibria are irrelevant for the non-vacuity of the weak-core. 
\end{abstract}

\textbf{keywords} :
Weak-core; $\alpha-$core; Game with a continuum of players; Large anonymous games; Normal form games.

JEL Classification : C02; C71.

\section{Introduction}

It is commonly admitted that the continuum property on the set of agents is valuable in economy and clearly inferring results from finite games is not possible or at least not obvious. Hence the study of the continuum case may be an interesting task \emph{per se} and requires, in the absolute, proper techniques. For some type of continuum games, it is possible, using adequate techniques, to prove analogous results to that known in finite-player games. Some other type of games with a continuum of players may involve, for the existence of pure strategy equilibria, unusual regularity conditions. For more accuracy, let us summarize roughly two models of continuum games prevalent in the literature. Denote by $(T,\TT,\mu)$ a probability space of players, and $A$ a common action space for the players. Under some other regularly assumptions making all the involved entities meaningful, a strategy profile is a function $f:T\rightarrow A$. The payoffs can be given by $U(t,f(t),f)$, a game which we denote $U$, or by $W(t,f(t),\mu f^{-1})$, a game we denote $W$. For the game $U$, the existence of a Nash equilibrium \citep{KHA85,KHA86,BAL95,BAL99} is obtained using expected conditions, mainly~: the concavity of $U(t,a,\cdot), (t,a)\in T\times A$, and some appropriate continuity hypothesis of $U(t,\cdot,\cdot),t\in T$. However, in case of the game $W$, \citet{SCH} shows that a pure-strategy Nash equilibrium exists when $A$ is finite and $\mu$ nonatomic. More generally, the countability of the action space $A$ is necessary for the existence of pure-strategy Nash equilibrium \citep{KhS95,KRS,KhS02} for $W$. This is an unexpected condition, or may be viewed as a new condition illustrating that it may happen that we can learn more in continuum games comparatively to finite-player games. 

\medskip
The Nash equilibrium is extensively investigated in the literature of normal form games with a continuum of players. Whereas, this is not the case for the $\alpha-$core. Up to now, many questions remain unsolved for the $\alpha-$core in the continuum situation. Recall that the $\alpha$-core of normal form games is introduced by \citet{AUM61}. Its main existence result, for games with a finite set of players, is established by \citet{SCA71}. Some generalizations to games of different aspects, but always a finite set of players are obtained in \citep{KAJ92,UYA15,ASK15,AST13}. When considering a continuum set of players, this concept must be somewhat approximated. \citet{WEB81} introduced an adequate approximation of the core, called weak-core, in the setting of games with a continuum of players in a characteristic function form. 

\medskip 
In this paper, we focus on the weak-core in the setting of strategic normal form games with a continuum of players and without side payments, as initiated in \citep{ASK11}. We succeeded to prove an existence result for a particular case of payoffs for the game $U$. We assume that the payoffs depend only on the strategy profile of the players, more precisely, payoffs of the form $U(t,f)$. Together with the concavity assumption and some other regularity conditions on payoffs, we establish an existence result. %\sout{Our proof of the existence result establishes an element in the weak-core as a limit of elements in $\alpha$-cores of games with a finite set of players. These games are constructed by dividing the set of players into a finite number of coalitions. The corresponding payoff in each coalition is obtained by summing up (integrating) the payoffs of constituting members. Besides, it is worth noting that such a ``canonical" process allows a construction of a net of exact solutions of the obtained games converging to an approximate solution of the original game.}

\medskip
Notwithstanding their conceptual differences and their targeted economic situations, the Nash equilibrium on one hand and the $\alpha-$core on the other involve, for their existence, analogous regularity conditions\footnote{See Section \ref{DIS}.} for games with a finite set of players.  We prove, in this paper, that this is not the case for games with a continuum set of players of type $W$. We provide an example of a game satisfying all the conditions required for the existence of the pure-strategy Nash equilibrium, but its weak-core is empty. Moreover, this example satisfies additional conditions that may be expected to be necessary for the weak-core. In this example the payoffs are more general, they depend on individual actions and on the distribution of the strategy profile of the players. This proves that for general payoffs, the weak-core may be empty and must require (for its existence) more restrictive hypotheses. 

For payoffs depending on the strategy profile and individual actions of the players (the game $U$), we expect that the weak-core can be empty under reasonable and intuitive conditions, even similar to that we use hereafter to study particular payoffs. For this kind of payoffs we do not succeeded to provide counter-examples. Note that these two case studies (payoffs depending directly on strategies or on their distributions) may be different when convexity assumptions are used.

\medskip
Properly speaking, the core is already approximated in order to relax the convexity assumption in finite large games \citep{ShS66}. It is shown that the convexity (of preference sets) is not very important when the set of agents becomes large enough. In transferable utility case, \citet{ShS66} obtained interesting results (non-vacuity of approximate cores) for exchange economies with sufficiently many participants. This approximation type is successfully applied later for (finite large) games induced from \textit{pergames} (a configuration in which the payoff achievable by a coalition is a function of the number of its players and their characteristics or attributes) in \citep{WOO83,WOO94,WOO08} and \citep{WoZ84}. In the case of a continuum set of players, \citet{WEB79} studied this concept (approximation in a same direction for the core) and proved its existence for balanced games in characteristic function form and without side payments. A slightly different approximation of the core is introduced by \citet{KAN70,KAN72} and applied for finite markets. \citet{HSZ73} carried on this direction in order to generalize the Shapley-Shubik results for large economies without side payments. Other studies on the approximation of the core can be found in (\citet{AND85} and \citet{STAR69}).

\medskip 
Weber's approximation \citep{WEB81} we deal with, in this work, is different from the above.  
Its purpose is not the overcoming of the nonexistence of the exact core resulting from the non convexity assumption, but to comply with the continuum case and upper semi-continuous payoffs. 
%Its purpose is to ``make possible" the passage from the finite to the continuum case,  rather than to overcome the nonexistence of the exact core resulting from the non convexity assumption. This approximation allows, as well, working with upper semi-continuous payoffs.
For games with a finite set of players {and continuous payoffs}, our weak-core is exactly the $\alpha-$core of Aumann. We provide, in Section \ref{SEX}, an example %of a game with a continuum set of players with concave payoffs, in which the $\alpha-$core is empty. 
of game satisfying all the used conditions for the non vacuity of the weak-core, in which the $\alpha-$core is empty, thus legitimating the introduction of the approximation. 
Even for the weak-core approximation, a second exemple shows that an additional regularity condition on payoffs with respect to players must be assumed, in order to handle the continuum framework. This role is played by the \emph{equi-upper-semicontinuity} of players' utility functions, a fact that does not show up in games with finite sets of players. In fact, for finite games, this assumption is satisfied automatically, provided all utilities are upper-semicontinuous.

\medskip
For non exhaustive list of works devoted to games with a continuum of players, the reader is referred to \citet{WEB81,WEB79}, \citet{IcW}, \citet{ROS} and \citet{KaW96} for utility characteristic function form games. For more general frameworks concerning (exchange) economies and markets, we can  cite the famous works of \citet{AUM64,AUM66}, \citet{HIL74,HIL68}, \citet{MAS}, \citet{HHK} and \citet{KhY81}. In a non-cooperative setting, the Nash equilibrium is particularly investigated in \citet{SCH}, \citet{MAS84}, \citet{KHA89}, \citet{RAT} and \citet{KRS}.

\section{Preliminaries, overall framework}\label{PRE}

Let $(T,\TT,\mu)$ be a probability space, where $\TT$ is a $\sigma$-algebra on $T$ and $\mu$ a $\sigma-$additive probability measure on $\TT$. The space $T$ refers to the set of players. Denote by $\TT^+$ the set of elements of $\TT$ with strictly positive $\mu$-measure. The sets of $\TT^+$ refer to coalitions. Note that two coalitions with $\mu$-null symmetric difference will be confused. 

Let $A$ be a convex compact subset of a separable\footnote{The separability of $X$ is not relevant because one can consider the subspace spanned by $A$ which is always a separable Banach space.} Banach space $X$. The set $A$ represents the space of actions. It is common to all players. Denote $\BB(X)$ the Borel $\sigma$-algebra of $X$.

Let $\BB(T,X)$ be the set of all $\TT - \BB(X)$ measurable essentially bounded functions from $T$ to $X$. The space of all measurable functions from $T$ to $A$ (that are automatically essentially bounded) is denoted by $\BB(T,A)$. It refers to the space of pure strategy profiles. The terms ``essentially bounded'' refer to the boundedness relatively to the essential supremum norm  $\|\cdot\|_{\BB(T,X)}$ on $\BB(T,X)$ abbreviated esssup-norm and defined by~: $$\|f\|_{\BB(T,X)}= \sup\left\{\alpha\in \IR : \mu\{t\in T,\|f(t)\|>\alpha\} >0\right\}$$
where $\|\cdot\|$ is the norm of $X$.  Since $A$ is compact, $\BB(T,A)$ is a norm closed subset of $\BB(T,X)$. In all this paper the complement of a subset $E\subset T$ into $T$ is denoted $\complement E$. The notation $X\setminus Y$ refers also to the complement of $Y$ to $X$.

Let $E\in \TT$. Denote $\BB(E,X)$ the set of all measurable essentially bounded functions from $E$ to $X$. $\BB(E,A)$ is defined similarly and refers to the set of strategies of the coalition $E$. For $f_E\in \BB(E,X)$ and $f_{\complement E}\in \BB(\complement E,X)$, denote $f_E/\!/f_{\complement E}$ the function defined by $f_E$ on $E$ and $f_{\complement E}$ on $\complement E$. $0_E$ denotes the almost everywhere null function of $\BB(E,X)$. Since $E$ is measurable, for every measurable function $f_E\in \BB(E,X)$, $f_E/\!/0_{\complement E}\in \BB(T,X)$ and for every $f\in \BB(T,X)$, $f_{|E}\in \BB(E,X)$. Hence, by identifying $\BB(E,X)$ with the subspace $\{f_E/\!/0_{\complement E} : f_E\in \BB(E,X)\}$, $\BB(T,X)$ can be represented as the algebraic direct sum~: $$\BB(T,X)=\BB(E,X)\oplus\BB(\complement E,X)$$

Note that this is also true for a finite number of factors. That is, for a given pairwise disjoint finite family $E_i\in\TT$, $i$ in a finite set $I$, such that $\underset{i\in I}{\cup} E_i=T$ up to a $\mu$-null set, we have the direct algebraic sum~:
\begin{equation}\label{ALG_SUM}
\BB(T,X)=\underset{i\in I}{\oplus} \BB(E_i,X)
\end{equation}

Endow $\BB(T,X)$ with a locally convex topology $\tau_{\BB(T,X)}$ satisfying the following statement~:
\begin{itemize}
\item[(C)] for every  $E\in \TT$, $\BB(E,A)$ is compact in $\BB(T,X)$ for $\tau_{\BB(T,X)}$.
\end{itemize}

We will see in the appendix that this condition is satisfied by the usual weak topologies. Denote by $\tau_{\BB(E,X)}$ the induced topology (from $\tau_{\BB(T,X)}$) on $\BB(E,X)$. The condition (C) means that $\BB(E,A)$ is compact for $\tau_{\BB(E,X)}$. In all this paper, if it is not expressly mentioned, all subspaces and subsets are endowed with the induced topology and all products with the  product topology. The abbreviation lsc means lower semi-continuous and usc means upper semi-continuous.

\section{The weak-core for a strategic normal form game}
A (strategic) normal form game with a continuum of players is defined by means of a function $U:T\times A \times\BB(T,A)\rightarrow \IR$.
As interpreted above, $T$ stands for the set of players, $A$ the common set of actions and $\BB(T,A)$ the set of strategy profiles. The payoffs are summarized in $U$ itself. When each player $t$ chooses his strategy $f(t)\in A$, we obtain a function $f:T\rightarrow A$. Assuming that this function is measurable, $f\in \BB(T,A)$, each player $t$ receives the gain $U(t,f(t),f)$.

Denote the obtained game by the triple $$G=((T,\mu),U,A)$$ or simply by $U$.

The weak-core defined in \citep{WEB81} and its blocking concept are adapted to continuum games in normal form of type $U$ as follows~:

\begin{definition}\label{BLS} We say that a coalition $E\in \TT^+$ blocks the strategy $h\in \BB(T,A)$, if there exist $\varepsilon>0$ and a strategy $f_E\in \BB(E,A)$, such that for all $f_{\complement E}\in \BB(\complement E,A)$,
$$U(t,f(t),f_E\co f_{\complement E})>U(t,h(t),h) +\varepsilon  \text{ a.e. on } E$$

The weak-\emph{core} of the game $G$ is the set of strategies that are not blocked by any coalition $E\in \TT^+$.
\end{definition}

Another conceivable model is to consider the action space as a compact metric space $A_0$ and to define the game by means of a function $W:T\times A_0\times \PP(A_0)\rightarrow \IR$. A convexity structure is not needed on $A_0$ (it may be finite following the needs). We denoted by $\PP(A_0)$ the set of probability measures on $A_0$. The space $\PP(A_0)$ is endowed as frequently with the weak (star) topology.

To each player $t$ corresponds a utility function $W(t,\cdot,\cdot):A_0\times\PP(A_0) \rightarrow \IR$. For this game, the set of strategy profiles $\BB(T,A_0)$ stands simply for $\TT-\BB(A_0)$ measurable functions. Under the strategy profile $f\in \BB(T,A_0)$, each player $t$ receives the gain $W(t,f(t),\mu f^{-1})$. Here, $\mu f^{-1}$ stands for the image probability of $\mu$ under the function $f$.

Denote the obtained game by the triple $$H=((T,\mu),W,A_0)$$ or, as previously, simply by $W$.

The previously defined blocking concept can be stated in an analogous way~:

\begin{definition}\label{BLD} For the game $H$, we say that a coalition $E\in \TT^+$ blocks the strategy $h\in \BB(T,A_0)$, if there exist $\varepsilon>0$ and a strategy $f_E\in \BB(E,A_0)$, such that for every $f_{\complement E}\in \BB(\complement E,A_0)$~:
$$W(t,f_E(t),\mu (f_E\co f_{\complement E})^{-1})>W(t,h(t), \mu h^{-1}) +\varepsilon  \text{ a.e. on } E$$

The weak-\emph{core} of the game $H$ is the set of strategies that are not blocked by any coalition $E\in \TT^+$.
\end{definition}

\begin{remark} By removing $\varepsilon$ from Definitions \ref{BLS} and \ref{BLD}, we obtain naturally the adapted Aumann's $\alpha$-core \citep{AUM61} and its corresponding blocking concept. Observe, trivially, in both cases that the weak-core contains the $\alpha-$core.
\end{remark}

Definitions \ref{BLS} and \ref{BLD} assert that a coalition blocks a given strategy of the game if it possesses a strategy making almost all its members better off, at least by some $\varepsilon>0$, regardless of the opponent coalition choices for strategy. Like any core concept, the weak-core describes stable situations in which no coalition has any incentive to form by playing a different strategy. Indeed, it cannot improve upon, relatively to the equilibrium strategy, the payoffs of almost all its members.

\citet{ASK11} studied a reduced form of the game $H$, where the payoffs do not depend on individual actions. Under the strategy profile $f\in \BB(T,A_0)$, it is assigned to each player $t$ the payoff $W(t,\mu f^{-1})$. Then, some topological regularity conditions and a concavity condition on the distribution argument of $W$ ensured an existence result. In Section \ref{DIS}, we discuss the game $H$ with some general payoffs.

In the following section we focus on a reduced form of $G$ with particular payoffs of the form $U(t,f)$ independent on individual actions\footnote{Note that the payoff form $U(t,f)$ does not generalize the form $W(t,\mu f^{-1})$ if a concavity assumption is needed on the second argument of $W$ and $U$. Indeed $f\mapsto W(t,\mu f^{-1})$ need not be concave in $f$ if $W(t, \cdot)$ is concave, even if we endow the action space $A_0$ with a convexity structure which is necessary before asking for the concavity of $W(t,\mu f^{-1})$ in $f$. That is the results proved in \citep{ASK11} and Theorem \ref{THM} of section \ref{STR} are different, because, each one uses a concavity assumption on the second argument of the payoffs. There is other differences related to the nature of the action spaces and the regularity assumptions on payoffs.}. We give further examples legitimating the introduction of the approximation ``weak-core" and discussing the used assumptions. We expect that the weak-core of $G$ may be empty for a general form of payoffs such as $U(t,f(t),f)$, under ``reasonable" regularity conditions. However, we do not succeeded to provide a counter-example for the latter case.

For the Nash equilibrium,  there is an alternative formulation of a continuum game on characteristics achieved by \citet{MAS84} for games with continuous payoffs, mainly generalized for upper semi-continuous payoffs by \citet{KHA89}. Unfortunately, it appears that the $\alpha-$core (weak-core) cannot comply with an analogous formulation in a straightforward manner.

\section{Payoff as a function of pure strategies}\label{STR}
In this section we consider the game $G$, where the payoff of each player is reduced to depend only on the strategy profile $f\in \BB(T,A)$. That is,  every player $t$ receives the payoff $U(t,f)$. Without more specification when speaking about ``blocking'',  we mean the blocking concept of Definition \ref{BLS} with the reduced payoffs $U(t,f)$, for $(t,f)\in T\times \BB(T,A)$.

\subsection{Existence result}

Let us introduce the following definition~: 
\begin{definition} Let $F$ be a topological space. A family of functions : $v_\gamma :F\rightarrow \IR, \gamma\in \Gamma,$ is said to be equi-usc at $x_0\in F$, if for every $\varepsilon>0$, there exists a neighborhood $V_{x_0}$ of $x_0$ in $F$, such that $$ v_\gamma(x)<v_\gamma(x_0)+\varepsilon,\forall x\in V_{x_0},\forall \gamma\in \Gamma.$$
The family $\{v_\gamma\}_{\gamma\in \Gamma}$ is said to be equi-usc (on $F$) if it is equi-usc at every $x\in F$. 
\end{definition}

We use the following conditions~: 
\begin{itemize}
\item[(R1)] for every $f\in \BB(T,A)$, $t\mapsto U(t,f)$ is measurable, and there exists a $\mu-$integrable function $\psi :T\rightarrow \IR_+$, such that $|U(t,f)|\leq \psi(t)$, for every $t\in T$ and every $f\in \BB(T,A)$.

\item [(R2)] for every $t\in T,$ $f\mapsto U(t,f)$ is concave and the set of utilities $\{U(t,\cdot): t\in T\}$ is equi-usc on $\BB(T,A)$.
\end{itemize}

\begin{theorem}\label{THM}

Under (R1) and (R2), the weak-core of $U$ is nonempty. Furthermore, an element of the weak-core is obtained as a limit of elements in { weak-cores} of appropriate finite games generated from $U$.
\end{theorem}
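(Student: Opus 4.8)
The plan is to prove the statement by a finite-player approximation, in the spirit of Scarf's theorem \citep{SCA71} on the nonemptiness of the $\alpha$-core of balanced games. First I would fix an increasing sequence of finite measurable partitions $\PP_n=\{T_1^n,\dots,T_{k_n}^n\}$ of $T$ by coalitions in $\TT^+$, chosen so that the algebras $\sigma(\PP_n)$ generate $\TT$ up to $\mu$-null sets (using that the measure algebra is countably generated). Using the algebraic decomposition (\ref{ALG_SUM}), each partition induces a finite game $G_n$ whose players are the blocks $T_i^n$: the strategy set of block $i$ is the convex, $\tau_{\BB(T,X)}$-compact set $\BB(T_i^n,A)$ (compactness by (C)), a profile is an element of $\BB(T,A)=\oplus_i\BB(T_i^n,A)$, and I would assign to block $i$ the aggregate payoff $u_i^n(f)=\mathrm{essinf}_{t\in T_i^n}U(t,f)$. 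This aggregation is the natural one for the weak-core, since blocking ``a.e. on $E$ by $\varepsilon$'' is exactly an essinf statement; moreover $u_i^n$ inherits from (R2) both concavity (an infimum of concave functions is concave) and upper semicontinuity, and from (R1) the uniform integrable bound $\psi$.

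Next I would invoke the finite-player existence theory: each $G_n$ has convex compact strategy sets and concave payoffs, hence its $\alpha$-characteristic function is balanced, and the finite-player result guarantees its weak-core (which contains the $\alpha$-core, the $\varepsilon$-relaxation absorbing the lack of continuity of the $u_i^n$) is nonempty; pick $h_n$ in it. Because $\BB(T,A)$ is $\tau_{\BB(T,X)}$-compact by (C) and this topology is metrizable on the compact set $\BB(T,A)$, I would extract a convergent subsequence $h_n\to h^\ast$ in $\BB(T,A)$. The claim is that $h^\ast$ lies in the weak-core of the continuum game $U$, which also yields at once the ``limit of finite-game core elements'' assertion.

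The heart of the proof, and the step I expect to be the main obstacle, is to show that $h^\ast$ is not blocked. Suppose for contradiction that some $E\in\TT^+$ blocks $h^\ast$: there are $\varepsilon>0$ and $f_E\in\BB(E,A)$ with $U(t,f_E\co f_{\complement E})>U(t,h^\ast)+\varepsilon$ a.e. on $E$, for every $f_{\complement E}$. The key use of equi-usc (R2) is to convert this into a \emph{uniform} comparison with $h_n$: since $\{U(t,\cdot)\}_{t\in T}$ is equi-usc at $h^\ast$, there is a $\tau_{\BB(T,X)}$-neighborhood $V$ of $h^\ast$ with $U(t,f)<U(t,h^\ast)+\varepsilon/3$ for all $f\in V$ and all $t$; as $h_n\to h^\ast$, eventually $h_n\in V$, giving $U(t,h_n)<U(t,h^\ast)+\varepsilon/3$ simultaneously for all $t$. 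Combined with the blocking inequality this yields $U(t,f_E\co f_{\complement E})>U(t,h_n)+2\varepsilon/3$ a.e. on $E$, with a margin \emph{independent of} $t$ --- exactly the uniformity that mere upper semicontinuity cannot provide, and which justifies the presence of this hypothesis.

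It then remains to descend this to a genuine block of $h_n$ inside $G_n$. I would approximate $E$ by a union of blocks $E_n=\bigcup_{i\in S_n}T_i^n$ with $\mu(E\triangle E_n)\to0$ (possible since the $\sigma(\PP_n)$ generate $\TT$), and let the coalition $E_n$ play $f_E$ on $E\cap E_n$ and, say, $h_n$ on $E_n\setminus E$. The remaining difficulty is that the adversarial complement in $G_n$ is $\complement E_n$ rather than $\complement E$, and that the profile is perturbed on the small-measure set $E\triangle E_n$; here I would again use equi-usc together with the fact that, for the weak topology underlying (C), modifying a strategy on a set of vanishing measure is a $\tau_{\BB(T,X)}$-small perturbation, so that for $n$ large the margin $2\varepsilon/3$ is degraded by at most $\varepsilon/3$ uniformly in $t$. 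This produces, for large $n$, an essinf improvement $u_i^n(\cdot)>u_i^n(h_n)+\varepsilon/3$ for every block $i\in S_n$ and against every complementary strategy, i.e. $E_n$ blocks $h_n$ in $G_n$, contradicting $h_n$ being in the weak-core of $G_n$. Hence $h^\ast$ is unblocked, and the proof is complete. The delicate bookkeeping over the symmetric-difference region, and the verification that the weak topology makes small-measure modifications $\tau_{\BB(T,X)}$-small, are where the argument must be carried out with care.
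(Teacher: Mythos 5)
Your overall architecture --- replace the continuum game by finite games whose players are the blocks of a finite partition, invoke Scarf-type nonemptiness for usc concave payoffs on compact convex strategy sets, pass to a limit point by (C), and use equi-usc to convert a block of the limit $h^\ast$ into a uniform-margin block of $h_n$ for large $n$ --- is the same as the paper's, and several of your ingredients are sound: the essinf aggregation is indeed concave and (thanks to equi-usc) usc, and the uniform comparison $U(t,h_n)<U(t,h^\ast)+\varepsilon/3$ for all $t$ is exactly the step where equi-usc is needed. However, there is a genuine gap in your final step, and it is not mere bookkeeping. When you approximate the blocking coalition $E$ by a union of blocks $E_n$ and let $E_n$ play $f_E$ only on $E\cap E_n$, the resulting profile of $G_n$ is \emph{not} of the form $f_E\co f_{\complement E}$: on $E\setminus E_n$ the adversary, not the coalition, chooses the action. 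The blocking hypothesis gives a lower bound on $U(t,\cdot)$ only at profiles agreeing with $f_E$ on all of $E$. To transfer that lower bound to the perturbed profile you would need $U(t,\cdot)$ not to \emph{drop} under a small-measure modification, i.e.\ a lower-semicontinuity (equi-lsc) estimate. But (R2) only gives equi-\emph{upper}-semicontinuity, which bounds $U$ from above near a point and says nothing about drops; so your claim that the margin is ``degraded by at most $\varepsilon/3$'' has no support from the hypotheses. A second, related problem is that your fixed increasing sequence of partitions generating $\TT$ presupposes that the measure algebra is countably generated, which is not assumed in the theorem, and your extraction of a convergent \emph{subsequence} presupposes metrizability of $\tau_{\BB(T,X)}$ on $\BB(T,A)$, which (C) does not give.

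The paper avoids both difficulties with one device: instead of a fixed generating sequence of partitions, it indexes the finite games by the directed set $\Pi$ of \emph{all} finite collections of coalitions containing $T$, refines each collection into a partition $\{K_j\}$, and uses subnets throughout. Then, given a putative blocking coalition $E$, one simply passes to indices $\pi\supseteq\{E,T\}$, for which $E$ is \emph{exactly} a union of blocks $K_j$ up to a null set; the coalition of those blocks plays exactly $f_E$ on exactly $E$, the profile is exactly of the form $f_E\co f_{\complement E}$, and no perturbation on a symmetric difference ever arises. (The paper also uses the integral $\int_{K_j}U(t,h_y)\,d\mu$ rather than the essinf as the block payoff; both aggregations work, with the blocking margin becoming $\delta\le\frac{\varepsilon}{2}\min_j\mu(K_j)$ in the integral case.) To repair your proof you should adopt this directed-set indexing, or otherwise ensure that every candidate blocking coalition is eventually a union of blocks; as written, the symmetric-difference step cannot be closed under (R1)--(R2).
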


\begin{proof}  Consider the set $\Pi$ of finite collections of coalitions  containing $T$. Each $\pi\in \Pi,\pi=\{E_i,i\in I_\pi\}$, where $I_\pi$ is finite and one of the sets $E_i=T$. Ordered by inclusion, $\Pi$ is a directed set. Let $\pi=\{E_i,i\in I_\pi\}$ be fixed in $\Pi$. Let $K_j,j\in J_\pi,$ be a finite family of pairwise disjoint measurable subsets of $T$ of strictly positive measure, such that every $E_i,i\in I_\pi,$ is an union (up to a $\mu$-null set) of some sets $K_j$. Naturally, we have $\underset{j\in J_\pi}{\bigcup}K_j=T$ up to a $\mu$-null set.

Considering section \ref{PRE} (and the appendix), for each $j\in J_\pi$, let $Y_j=\BB(K_j,A)$ endowed with the induced topology from $\tau_{\BB(T,X)}$. Denote $Y_\pi=\underset{j\in J_\pi}{\prod} Y_j$ endowed with the resulting product topology. Following the properties of $\tau_{\BB(T,X)}$, all the sets $Y_j,j\in J_\pi,$ and $Y_\pi$ are compact. For each $y=(y_{j_1},...,y_{j_{|J_\pi|}})\in Y_\pi$ corresponds a function $h_y :T\rightarrow A$ defined up to a $\mu$-null set by~: $$h_y=y_{j_1}\co y_{j_2}\co...\co y_{j_{|J_\pi|}}.$$

That is,

$$h_y(t)=\left\{\begin{array}{l}
                  y_j(t), \text{ if } t\in K_j, \\
                  \text{an arbitrary }a\in A \text{ if } t\in T\setminus \underset{j\in J_\pi}\bigcup K_j.
                \end{array}\right.
$$
Since the sets $K_j,j\in J_\pi,$ are pairwise disjoint, $h_y$ is well defined and obviously $h_y\in \BB(T,A)$.

For each $j\in J_\pi$, define a function $g_j :Y_\pi\rightarrow \IR$ by~: $$g_j(y)=\int_{K_j} U(t,h_y) \; d\mu(t), \forall y\in Y_\pi.$$

From the measurability and the boundedness assumption (R1), the functions $g_j$ are well defined and bounded. 

Since $\underset{j\in J_\pi}{\bigcup}K_j=T$ up to a $\mu$-null set, the functions $g_j$ do not depend on the values of $h_y$ defined arbitrarily on $T\setminus \underset{j\in J_\pi}\bigcup K_j$.

Associate with $\pi$ the finite normal form game~:

$$G_\pi=(J_\pi,Y_\pi,\{g_j : j\in J_\pi\})$$

where $J_\pi$ is the set of players, $Y_\pi$ is the product of strategy spaces and $g_j$ is the payoff of the player $j\in J_\pi$.

Note that the functions $g_j$ are  concave on $Y_\pi$. This follows from the  concavity of the functions $U(t,\cdot)$, assumed in (R2) and the linearity of the canonical function $y\mapsto h_y$.
Moreover, the functions $g_j$ are upper semi-continuous. Indeed, let $\{y_\omega\}_{\omega\in \Omega}$ be a net in $Y_\pi$ converging to $y\in Y_\pi$. Let us identify, by using the canonical map described above, the elements of $Y_\pi$ with that of $\BB(T,A)$ and show the upper semi-continuity of the functions $g_j$ on $\BB(T,A)$ (see the property (P) in the appendix).

Let $\varepsilon>0$ be fixed. Using the equi-usc condition (assumed in (R2)), consider a neighborhood $V_\varepsilon(y)$ of $y$ such that,
$$ U(t,y)+\varepsilon>U(t,y'), \forall t\in T,\forall y'\in V_\varepsilon(y).$$

Then there exists $\omega_0 \in \Omega$ such that,
$$ U(t,y)+\varepsilon>U(t,y_w), \forall t\in T,\forall \omega>\omega_0.$$
It results that,

$$\int_{K_j}U(t,y_\omega)\;d\mu< \int_{K_j}U(t,y)\;d\mu+\varepsilon \mu(K_j), \forall j\in J_\pi,\forall \omega>\omega_0.$$

It follows, $$\underset{\omega}{\lim \sup} \int_{K_j}U(t,y_\omega)\;d\mu\leq \int_{K_j}U(t,y)\;d\mu+\varepsilon \mu(K_j), \forall j\in J_\pi.$$

Since $\varepsilon$ is fixed arbitrarily, we conclude that $\underset{\omega}{\lim \sup} \int_{K_j}U(t,y_\omega)\;d\mu\leq \int_{K_j}U(t,y)\;d\mu$, for all $j\in J_\pi$. Hence, the functions $g_j$ are upper semi-continuous.

By the non-vacuity of the $\alpha$-core theorem for games with a finite set of players \citep{SCA71}, and by observing that Scarf's non-vacuity theorem { establishes the existence of elements in the weak-core for games with convex and compact strategy spaces which are subsets of Hausdorff topological vector spaces and usc bounded concave payoffs} (see the appendix), $G_\pi$ has a nonempty { weak-core}. Let $y_\pi$ in the { weak-core} of $G_\pi$ and denote simply $h_\pi=h_{y_\pi}$ the corresponding function in $\BB(T,A)$.

Since $\BB(T,A)$ is compact for $\tau_{\BB(T,X)}$,  the net $h_\pi,\pi\in \Pi,$ has a convergent sub-net, denoted again $h_\pi, \pi\in \Pi$. Denote $h$ the limit of this sub-net.

Let us prove that $h$ belongs to the weak-core of $U$. Assume the opposite. Then, there exists a coalition $E\in \BB_+(T)$ that blocks $h$. From the equi-usc condition, $E$ blocks all strategies in a neighborhood $\theta(h)$ of $h$ by a same strategy. Indeed, there exist $f_E\in \BB(E,A)$ and $\varepsilon>0$, such that for all $f_{\complement E}\in \BB(\complement E,A)$,
$$U(t,f_{E}\co f_{\complement E})> U(t,h) + \varepsilon \text{ for a.e. } t\in E.$$
Using the equi-usc condition, there exists a neighborhood $\theta(h)$ of $h$ such that 
$$U(t,h)+\frac{\varepsilon}{2}>U(t,h'),\forall h'\in \theta (h),\forall t\in T. $$
Then, 
$$U(t,h)+\varepsilon>U(t,h')+\frac{\varepsilon}{2},\forall h'\in \theta (h),\forall t\in T. $$
It results that for all $f_{\complement E}\in \BB(\complement E,A)$, for all $ h'\in \theta (h)$, 
$$U(t,f_{E}\co f_{\complement E})> U(t,h') + \frac{\varepsilon}{2}, \text{ for a.e. } t\in E.$$

Then, there exists $\pi_0\in \Pi$, such that for all $f_{\complement E}\in \BB(\complement E,A)$,

$$U(t,f_{E}\co f_{\complement E})> U(t,h_\pi) +\frac{\varepsilon}{2}, \text{ for a.e. } t\in E, \forall \pi\geq \pi_0.$$

Let $\pi_1=\{E,T\}$ and consider $\pi\in \Pi$ such that $\pi>\pi_1$ and $\pi>\pi_0$. Then, $\pi_1\subset \pi$. In the game $G_\pi$, there is a subset of indices $J_\pi(E)\subset J_\pi$ such that $E=\underset{j\in J_\pi(E)}{\cup}K_j$ up to a $\mu$-null set, where the sets $K_j,j\in J_\pi$, are obtained as above relatively to the present game $G_\pi$. Put for every $j\in J_\pi(E)$, $x_j={f_E}_{|K_j}$. Denote $x_{L}=(x_j)_{j\in L}$, for every $L\subset J_\pi$.

Hence, for every $x_{J_\pi\setminus J_\pi(E)} \in Y_{J_\pi\setminus J_\pi(E)}$, the restriction of $h_{(x_{J_\pi(E)},x_{J_\pi\setminus J_\pi(E)})}$ to $E$ is equal to $f_E$ up to a $\mu$-null set. It results that, for all $x_{J_\pi\setminus J_\pi(E)}\in Y_{J_\pi\setminus J_\pi(E)}$ and all $j\in J_\pi(E)$,

$$  \begin{array}{c} 
g_j(x_{J_\pi(E)},x_{J_\pi\setminus J_\pi(E)})=\int_{K_j} U(t,f_E\co h_{x_{J_\pi\setminus J_\pi(E)}}) \; d\mu(t)>\int_{K_j}U(t,h_\pi)\; d\mu(t){+\delta}=g_j(y_\pi){+\delta}.\\
\end{array}
$$

{Where $\delta\in]0;\; \frac{\varepsilon}{2}\underset{j\in J_\pi(E)}{\min} \mu(K_j)]$, and } $y_\pi \in Y_\pi$ corresponds to $h_\pi$ as explained above and $h_{x_{J_\pi\setminus J_\pi(E)}}$ is the function whose restriction to every $K_j, j\in J_\pi\setminus J_\pi(E),$ equals $x_j$. This means that the coalition $J_\pi(E)$ blocks $y_\pi$ for the {weak-core} blocking concept. This is a contradiction since $y_\pi$ belongs to the {weak-core} of $G_\pi$. This ends the proof.
\end{proof}

\begin{proposition}\label{CONTP} The condition of equi-upper-semicontinuity of $U$ on strategies and the measurability of $U(\cdot,f)$, for every $f\in \BB(T,A)$,  
are satisfied if $T$ is a compact topological space endowed with its Borel $\sigma-$algebra, and $U$ is jointly usc and continuous in $t$.  
\end{proposition}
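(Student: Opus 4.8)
The statement splits into two independent claims, and the plan is to dispatch them separately. The measurability of $t\mapsto U(t,f)$ for fixed $f$ is immediate: by hypothesis this map is continuous on $T$, and since $T$ carries its Borel $\sigma$-algebra, every continuous real-valued function is Borel measurable. So the real content lies in the equi-upper-semicontinuity of the family $\{U(t,\cdot):t\in T\}$, and the engine there will be the compactness of $T$ combined with the joint upper semi-continuity of $U$.

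To establish equi-usc at a fixed $f_0\in\BB(T,A)$, I would fix $\varepsilon>0$ and, for each $t\in T$, exploit joint upper semi-continuity of $U$ at the point $(t,f_0)\in T\times\BB(T,A)$ to produce a product neighborhood $W_t\times V_t$ (with $W_t$ a neighborhood of $t$ in $T$ and $V_t$ a neighborhood of $f_0$ in $\BB(T,A)$) on which $U(s,f)<U(t,f_0)+\tfrac{\varepsilon}{2}$ for all $(s,f)\in W_t\times V_t$. Note that this bound is referenced to the \emph{fixed base point} $t$, whereas the definition of equi-usc demands a bound referenced to the \emph{running point} $s$; bridging this gap is the crux of the argument, and it is exactly where continuity in the player variable must be used.

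The bridge is as follows. For each $t$, continuity of $s\mapsto U(s,f_0)$ at $t$ furnishes a further neighborhood $W_t'$ of $t$ with $U(t,f_0)<U(s,f_0)+\tfrac{\varepsilon}{2}$ for all $s\in W_t'$. Setting $\tilde{W}_t=W_t\cap W_t'$ and chaining the two estimates yields $U(s,f)<U(t,f_0)+\tfrac{\varepsilon}{2}<U(s,f_0)+\varepsilon$ for every $(s,f)\in\tilde{W}_t\times V_t$, now with the desired running reference point $s$. It is essential that this pointwise bound be obtained \emph{before} passing to a finite subcover, since it is precisely the running reference that compactness must preserve.

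Finally I would close by compactness of $T$. The neighborhoods $\{\tilde{W}_t:t\in T\}$ form an open cover of $T$, so there is a finite subcover $\tilde{W}_{t_1},\dots,\tilde{W}_{t_n}$; put $V=\bigcap_{k=1}^{n}V_{t_k}$, which is a neighborhood of $f_0$ because the intersection is finite. Given any $s\in T$ and any $f\in V$, one has $s\in\tilde{W}_{t_k}$ for some $k$ and $f\in V_{t_k}$, whence $U(s,f)<U(s,f_0)+\varepsilon$. As $s\in T$ was arbitrary, $V$ witnesses equi-usc at $f_0$, and since $f_0$ was arbitrary the family is equi-usc on all of $\BB(T,A)$. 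I expect the only delicate point to be the interplay of the two neighborhood systems in the middle two steps: combining the base-point-referenced bound from joint usc with the continuity-in-$t$ estimate at the level of each single $t$, so that the uniform $V$ extracted at the end genuinely controls $U(s,\cdot)$ at each $s$ by its \emph{own} value $U(s,f_0)$ rather than by the value at some nearby $t_k$.
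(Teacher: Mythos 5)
Your argument is correct and follows essentially the same route as the paper's proof: the same two estimates (one from joint upper semi-continuity giving a product neighborhood, one from continuity in $t$ to switch the reference point from the base point to the running point), chained in the same order and then made uniform by extracting a finite subcover of $T$ and intersecting the finitely many strategy-neighborhoods. The measurability part is likewise handled identically, as an immediate consequence of continuity in $t$.
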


\begin{proof}The measurability of $U(\cdot,f)$, for every $f\in \BB(T,A)$, results obviously from the continuity of $U$ in $t$. 
Let us prove the equi-usc property. 
Let $\varepsilon>0$ and $f\in \BB(T,A)$. Using the continuity of $U$ with respect to $t$, we can find for every $t\in T$ a neighborhood $V^1_f(t)$ such that,

$$U(t',f)+\varepsilon >U(t,f)+\frac{\varepsilon}{2},\forall t'\in V^1_f(t).$$

From the upper semi-continuity of $U$ on its domain, we can find for every $t\in T$ a neighborhood $V^2_f(t)$ of $t$ and a neighborhood $V_t(f)$ of $f$ such that,

$$U(t,f)+\frac{\varepsilon}{2}>U(t',f'),\forall (t',f')\in V^2_f(t)\times V_t(f).$$

Taking $V_f(t)=V^1_f(t)\cap V^2_f(t)$, we obtain,

$$U(t',f)+\varepsilon>U(t,f)+\frac{\varepsilon}{2} >U(t',f'), \forall (t',f')\in V_f(t)\times V_t(f).$$

When $t$ ranges $T$ we obtain a cover $V_f(t), t\in T$, of $T$ with the corresponding neighborhoods $V_t(f)$ of $f$. Since $T$ is compact, we can extract a finite sub-cover $V_f(t_i)$, $i$ in a finite set $I$.

Put $V(f)=\underset{i\in I}{\cap} V_{t_i}(f)$. Then, for every $t'\in T,$ there exists an index $i$ such that $t'\in V_f(t_i)$. Then, for every $f'\in V(f)$,

$$U(t',f)+\varepsilon>U(t_i,f)+\frac{\varepsilon}{2}>U(t',f'), \forall f'\in V(f).$$
We have constructed a neighborhood $V(f)$ of $f$ such that,
\begin{equation}\label{Lem_Voisin}
U(t,f)+\varepsilon>U(t,f'),\forall f'\in V(f),\forall t\in T.
\end{equation}

\end{proof}
\subsection{The equi-usc hypothesis and the necessity of the approximation of the $\alpha$-core}\label{SEX}
In this section, we provide two examples. The first one shows that Theorem \ref{THM} may fail if the equi-usc condition of $U$ on strategies is relaxed. The second, shows that there may be situations in which all our regularity conditions are satisfied and the $\alpha$-core is empty. Then, this example legitimizes the approximation of the $\alpha$-core. Before stating the examples, let us begin with the common used constructions.

In this section, fix $T=[0,1]$ and $A=[0,1]$.  Consider the Borel $\sigma$-algebra on $T$ and set the previous probability $\mu$ to be the Lebesgue measure $\lambda$ which is a probability on $[0,1]$. We deal here with the particular case of $X=\IR$.  Take $\BB(T,A)\subset L_\infty(T)$ endowed with the weak$^*$ topology. Section \ref{EXTOP}, of the appendix, ensures all topological needs on the spaces. Precisely, the Bochner integral reduces to the Lebesgue integral and Condition (C) is satisfied. Note that $\BB(T,A)$ is metrizable for the weak$^*$ topology, because $L_1(T)$ is separable.

Let $\Gamma:]0,1]\times L_\infty(T)\rightarrow \IR$, defined for every $(t,f)\in ]0,1]\times L_\infty(T)$ by~:
$$ \displaystyle\Gamma(t,f)=\frac{\int_0^t f d\lambda}{t}.$$

Then, for every $t\in ]0,1]$, the function $f\mapsto \Gamma(t,f)$ is linear and continuous on $L_\infty(T)$. It can be represented by an element of $L_1(T)$ in the corresponding duality. That is, $\displaystyle\Gamma(t,\cdot)=\frac{\chi_{[0,t]}}{t}\in L_1(T)\subset L_\infty^*(T)$. Moreover, $\Gamma$ is continuous on $]0,1]\times \BB(T,A)$, as it is obviously sequentially continuous on the metrizable space $]0,1]\times \BB(T,A)$.

Define a second function $G$ on $]0,1]\times \BB(T,A)$, by~:

 $$G(t,f)=\underset{s\in ]0,t]}{\sup}\Gamma(s,f) \text{ for every }(t,f)\in ]0,1]\times \BB(T,A).$$

Then,
\begin{center} $G$ is jointly lsc, convex in $f$ and continuous in $t$. \end{center}

Indeed, it is clear that $G$ is  convex in $f$ as a supremum of linear functions.

Let us prove that it is jointly lsc. Fix $t_0\in ]0,1]$, $f_0\in \BB(T,A)$ and $\alpha\in \IR$ such that $G(t_0,f_0)>\alpha$. Let $\alpha_1\in \IR$ such that $G(t_0,f_0)>\alpha_1>\alpha$. From  the definition of $G$ and the continuity of $\Gamma$ on $]0,1]\times \BB(T,A)$, there is $t_1\in ]0,t_0[$ such that $G(t_0,f_0)<\Gamma(t_1,f_0)+\alpha_1-\alpha$. Using the continuity of $\Gamma$ again, we can find, on $\BB(T,A)$, a neighborhood $V(f_0)$ of $f_0$, such that for all $f'\in V(f_0)$,
$$\Gamma(t_1,f')>G(t_0,f_0)-(\alpha_1-\alpha)>\alpha_1-(\alpha_1-\alpha)=\alpha.$$

Then, for every $f'\in V(f_0)$, for every $t\in ]t_1,1]$, $\underset{s\in]0,t]}{\sup}\Gamma(s,f')>\alpha$.
That is, there exists a neighborhood $V(t_0)=]t_1,1]$ of $t_0$  and a neighborhood $V(f_0)$ of $f_0$ having just defined, such that,
for every $(t,f)\in V(t_0)\times V(f_0)$,
$$G(t,f)>\alpha.$$
This means that $G$ is jointly lsc. It remains to prove that $G$ is continuous with respect to $t$. Let $f\in \BB(T,A)$ be fixed. Since $G$ is jointly lsc, it suffices to prove that $G(\cdot,f)$ is usc. Let $t_0\in]0,1]$ and $\alpha>0$ such that $G(t_0,f)<\alpha$. If $t_0=1$, there is nothing to do, because this implies $G(t',f)<\alpha$, for all $t'\in]0,1]$. Else, let $\alpha_1>0$ such that  $G(t_0,f)=\underset{s\in]0,t_0]}{\sup}\Gamma(s,f)<\alpha_1<\alpha$. Since $\Gamma$ is continuous in $t$, there exists a neighborhood $V(t_0)$ of $t_0$ such that, $\Gamma(t',f)<\alpha_1$, for all $t'\in V(t_0)$. Then, $\underset{s\in V(t_0)}{\sup}\Gamma(s,f)\leq\alpha_1$. Hence,

$$\underset{s\in ]0,t_0]\cup V(t_0)}{\sup}\Gamma(s,f)\leq\alpha_1<\alpha.$$
Choosing $V(t_0)$ to be an open interval centered at $t_0$, we obtain, for every $t'\in V(t_0)$,
$$G(t',f)<\alpha.$$
Then $G(\cdot,f)$ is usc.

\bigskip
The following example shows how the equi-usc of  $U$ on strategies is crucial in Theorem \ref{THM}.
\begin{example}\label{ex-NWC}

The payoff function $U$ is defined as follows~:
$$U(t,f)=\left| \begin{array}{l}
                  \min\left\{\Gamma(t,f),(1-G(t,f))  \displaystyle\frac{1-t}{t} \right\}\qquad \text{ if }t>0, \\
                  1 \qquad \qquad\; \text{ for }t=0.
                 \end{array}
         \right.
 $$

The function $U$ has the following properties~:
\begin{itemize}
\item[(*)] $U$ is bounded, jointly usc on $[0,1]\times \BB(T,A)$ and  concave with respect to its variable $f$.
\end{itemize}
Before proving (*),  note that $U$ satisfies the conditions of Theorem \ref{THM} except the equi-usc condition of $\{U(t,\cdot),t\in T\}$ at $f\equiv 0$.  In fact, since $U(\cdot,f)$ is usc,  it is measurable, the boundedness property completes (R1). The concavity assumption stated in (R2) is expressly mentioned in (*).  

Let us prove (*). It is clear that $U(T\times\BB(T,A))\subset [0,1]$. The function $G$ is lsc on $]0,1]\times \BB(T,A)$. Then, $(t,f)\mapsto 1-G(t,f)$ is usc on $]0,1]\times \BB(T,A)$. Since $t\mapsto \displaystyle\frac {1-t}{t}$ is positive and continuous on $]0,1]$, we can easily verify that $\displaystyle (1-G(t,f))\frac{1-t}{t}$ is usc on $]0,1]\times \BB(T,A)$. Since $G$ is convex in $f$ for every fixed $t\in ]0,1]$, the function $\displaystyle (1-G(t,f))\frac{1-t}{t}$ is concave in $f$ for every fixed $t\in ]0,1]$. Whereas, $\Gamma(t,f)$ is linear in $f$ and continuous on $]0,1]\times \BB(T,A)$. It results that $U$, as a minimum among these two functions, is usc on $]0,1]\times \BB(T,A)$ and  concave in $f$ for every $t\in ]0,1]$. For $t=0$, $U$ is constant in $f$, then  concave as well. Observe now that for every $t\in ]0,1]$ and every $f\in \BB(T,A)$, $U(t,f)\in [0,1]$. Then if $(t_n,f_n)$ is a sequence in $]0,1]\times \BB(T,A)$ converging to $(0,f)$, necessarily $\underset{n}{\lim\sup}U(t_n,f_n)\leq1=U(0,f)$.
Hence, $U$ is also usc at every point of the form $(0,f),$ $f\in \BB(T,A)$.
At this step we proved that $U$ satisfies all the properties listed in (*).

Let us prove now that the weak-core of $U$ is empty.

Let $f$ such that $\underset{t\rightarrow 0}{\lim \sup } \Gamma(t,f)=1$. Let us show that such a strategy cannot be in the weak-core. Indeed, with the assumption $\underset{t\rightarrow 0}{\lim \sup } \Gamma(t,f)=1$, for every $t\in ]0,1]$, $G(t,f)=1$. It results that $U(t,f)=0$ for every $t\in ]0,1]$. If the coalition $E=]0,1/2]$ plays $\displaystyle h_E\equiv \frac{1}{2}$, we obtain, $\displaystyle \Gamma(t,h_E\co h_{\complement E})=G(t,h_E\co h_{\complement E})=\frac{1}{2}$ and $\displaystyle [1-G(t,h_E\co h_{\complement E})]\frac{1-t}{t}=\frac{1}{2}.\frac{1-t}{t}\geq \frac{1}{2}$, for every $h_{\complement E}\in \BB(\complement E,A)$ and every $t\in ]0,1/2]$. Then, $E$ blocks $f$.

Now, let $f\in \BB(T,A)$ such that $\underset{t\rightarrow 0}{\lim \sup } \Gamma(t,f)<1$. Then, there exist $t_0\in ]0,1]$ and $\alpha \in ]0,1[$, such that $\Gamma(t,f)<\alpha$ for every $t\in ]0,t_0]$. Then, for every $t\in ]0,t_0]$, $U(t,f)<\alpha$.

Let $\displaystyle t_1\in ]0,t_0]$ such that $1-t_1>\alpha$. Consider the coalition $E=]0,t_1]$ with its strategy $h_E\equiv 1-t_1$. Then, for every $t\in ]0,t_1]$ and every $h_{\complement E}\in \BB(\complement E,A)$,

$$
\begin{array}{ll}
   & \Gamma(t,h_E\co h_{\complement E})=G(t,h_E\co h_{\complement E})=1-t_1 \text{ and }\\
   & \displaystyle(1-G(t,h_E\co h_{\complement E}))\frac{1-t}{t}=t_1\frac{1-t}{t}=\frac{t_1}{t}.(1-t)\geq 1-t\geq 1-t_1
\end{array}
$$

That is, for every $t\in ]0,t_1]$ and every $h_{\complement E}\in \BB(\complement E,A)$, $U(t, h_E\co h_{\complement E})=1-t_1> \alpha> U(t,f)$. This can be rewritten as~: for every $t\in ]0,t_1]$ and every $h_{\complement E}\in \BB(\complement E,A)$,
$$U(t, h_E\co h_{\complement E})> U(t,f)+(1-t_1)-\alpha.$$
Which means that $E$ blocks $f$. From the foregoing, we can state that the weak-core of $U$ is empty.

\end{example}

The following example establishes that the weak-core may be strictly larger than the $\alpha$-core.

\begin{example}
Here, $U$ is defined as~:
$$U(t,f)=\left| \begin{array}{l}
                  \min\left\{\displaystyle\int_0^t f\;d\lambda,(1-G(t,f))  \displaystyle\frac{1-t}{t} \right\}\qquad \text{ if }t>0, \\
                  0 \qquad \qquad\; \text{ for }t=0.
                 \end{array}
         \right.
 $$

$U$ satisfies all the properties~:
\begin{itemize}
\item[(**)] $U$ is bounded, jointly usc on $[0,1]\times \BB(T,A)$,  concave with respect to its variable $f$, for every fixed $t\in [0,1]$, and continuous in $t$, for every fixed $f\in \BB(T,A)$.
\end{itemize}

Before proving (**), observe, using proposition \ref{CONTP}, that it implies all the conditions of theorem \ref{THM}.

It is clear that $U(T\times\BB(T,A))\subset [0,1]$. Analogously to the previous example, we obtain easily that $U$ is jointly usc on $]0,1]\times \BB(T,A)$ and $U(t,\cdot)$ is concave on $\BB(T,A)$ for every fixed $t\in [0,1]$. Since for every fixed $f\in \BB(T,A)$, $G(\cdot,f)$ is continuous on $]0,1]$, it is clear that $U(\cdot,f)$ is continuous on $]0,1]$ for every fixed $f\in \BB(T,A)$. To achieve the verification of (**), it suffices to prove the continuity of $U$ at $(0,f)$, for every $f\in \BB(T,A)$. In fact, if $(t_n,f_n)$ is a sequence in $]0,1]\times \BB(T,A)$ converging to $(0,f)$, necessarily $\underset{n}{\lim\sup} U(t_n,f_n)\leq \underset{n}{\lim\sup} \int_0^{t_n} f_n \;d\lambda= 0=U(0,f)$. Since $U(t,h)\geq 0$, for every $(t,h)\in [0,1]\times \BB(T,A)$, necessarily $\underset{n}{\lim} U(t_n,f_n)=0=U(0,f)$. Thus, $U$ is continuous at $(0,f)$.

\vspace{12pt}
Let us prove now that the $\alpha$-core of $U$ is empty.
As in the previous example, an element $f\in \BB(T,A)$ such that $\underset{t\rightarrow 0}{\lim \sup } \Gamma(t,f)=1$ cannot be in the $\alpha$-core. For such element $f$, we have necessarily $U(t,f)=0$ for every $t\in ]0,1]$. If the coalition $E=]0,1/2]$ plays $\displaystyle h_E\equiv \frac{1}{2}$, we obtain, for every $h_{\complement E}\in \BB(\complement E,A)$ and every $t\in ]0,1/2]$,
$$\int_0^t h_E\co h_{\complement E}\;d\lambda=\frac{t}{2}\text{ and }[1-G(t,h_E\co h_{\complement E})]\frac{1-t}{t}=\frac{1}{2}.\frac{1-t}{t}\geq \frac{1}{2}$$

Then, $E$ blocks $f$ for the $\alpha$-core blocking concept.

Let $f\in \BB(T,A)$ such $\underset{t\rightarrow 0}{\lim \sup } \Gamma(t,f)<1$.
Take $t_0\in ]0,1/2]$ and $\alpha \in ]0,1[$, such that $\Gamma(t,f)<\alpha$ for every $t\in ]0,t_0]$.
Then, for every $t\in ]0,t_0]$, $G(t,f)\leq \alpha$. That is,
\begin{equation}\label{IN_EXO2_0}
\displaystyle\forall t\in ]0,t_0], (1-G(t,f))\frac{1-t}{t}\geq 1-\alpha
\end{equation}
Since for all $t\in]0,t_0]$, $\displaystyle \Gamma(t,f)=\frac{\int_0^t f\;d\lambda}{t}<\alpha$,

\begin{equation}\label{IN_EXO2_0_b}
\forall t\in]0,t_0], \int_0^t f\;d\lambda<\alpha t
\end{equation}

Let $t_1\in]0,t_0]$ and $\alpha_1>0$ such that,

\begin{equation}\label{IN_EXO2_1}
\alpha t<\alpha_1<1-\alpha, \forall t\in ]0,t_1]
\end{equation}

Then, taking into account \eqref{IN_EXO2_0} and \eqref{IN_EXO2_0_b},
\begin{equation}\label{IN_EXO2_2}
\forall t\in ]0,t_1], U(t,f)=\int_0^t f\;d\lambda<\alpha t
\end{equation}
From this step, it is easy to provide, as in the previous example, coalitions blocking $f$ by constant strategies. We will see thereafter, that such coalitions may possess other type of blocking strategies.

Since $\displaystyle \frac{\int_0^t f d\lambda}{t}<1$, for all $t\in ]0,t_0]$, necessarily $\displaystyle \frac{\int_0^t f-1 \; d\lambda}{t}<0$, for all $t\in ]0,t_0]$. Hence,
\begin{equation}\label{IN_EXO2_3}
\int_0^t 1-f\; d\lambda>0,\forall t\in]0,t_0]
\end{equation}

Let $\delta >0$ to be fixed later.
We have for every $t\in ]0,1]$, $\Gamma(t,f+\delta(1-f))=\Gamma(t,f)+\Gamma(t,\delta(1-f))$. It results, for every $t\in ]0,1]$, $$G(t,f+\delta(1-f))\leq G(t,f)+G(t,\delta(1-f)).$$
Furthermore, for every $t\in ]0,1]$, $$G(t,\delta(1-f))=\underset{s\in ]0,t]}{\sup} \frac{\int_0^s \delta(1-f)\; d\lambda}{s}\leq \underset{s\in ]0,t]}{\sup} \frac{\int_0^s \delta\; d\lambda}{s}=\delta.$$

Gathering the two previous equations, we see that for every $t\in ]0,t_0]$,
$$G(t,f+\delta(1-f))\leq G(t,f)+\delta\leq \alpha+\delta.$$

Taking $\delta_1\in ]0,1]$ such that $1-\alpha-\delta_1>0$,  we obtain for every $\delta\in ]0,\delta_1]$,
\begin{equation}\label{IN_EXO2_4}
\forall t\in ]0,t_0], [1-G(t,f+\delta(1-f))]\frac{1-t}{t}\geq [1-\alpha-\delta]\frac{1-t}{t}\geq 1-\alpha-\delta.
\end{equation}
Consider the coalition $E=]0,t_2]$, $t_2\in ]0,t_1]$ to be fixed later, and its strategies of the form $h_E(\delta)=f_{|E}+\delta (1-f_{|E})$. Remark first that such strategies are feasible for every $\delta\in [0,1]$.
Observe now  that $G(t,h_E(\delta)\co h_{\complement E})= G(t,f+\delta(1-f))$ for every $\delta>0$, $t\in E$ and $h_{\complement E}\in \BB(\complement E,A)$. Then, from \eqref{IN_EXO2_4}, for every $\delta \in ]0,\delta_1]$, $t\in E$ and $h_{\complement E}\in \BB(\complement E,A)$,
\begin{equation}\label{IN_EXO2_5}
[1-G(t,h_E(\delta)\co h_{\complement E})]\frac{1-t}{t}\geq 1-\alpha-\delta.
\end{equation}
In another hand, from \eqref{IN_EXO2_0_b}, for every $\delta>0$ and $t\in E$,
\begin{equation}\label{IN_EXO2_6}
\int_0^t h_E(\delta)\co h_{\complement E}\;d\lambda=\int_0^t f\;d\lambda+\delta\int_0^t(1-f)\;d\lambda\leq  \alpha t+\delta t.
\end{equation}

Choose according to \eqref{IN_EXO2_1}, $\delta\in ]0,\delta_1]$ and $t_2\in ]0,t_1]$ such that,

$$\alpha t+\delta t< \alpha_1< 1-\alpha-\delta,\forall t\in ]0,t_2].$$

This equation together with \eqref{IN_EXO2_5}, \eqref{IN_EXO2_6} and \eqref{IN_EXO2_2} provides, for all $t\in E$, and all $h_{\complement E}\in \BB(\complement E,A)$,

$$U(t,h_E(\delta)\co h_{\complement E})=\int_0^t h_E(\delta)\co h_{\complement E}\:d\lambda=U(t,f)+\delta\int_0^t(1-f)\;d\lambda.$$

That is, using \eqref{IN_EXO2_3}, $U(t,h_E(\delta)\co h_{\complement E})>U(t,f)$, for all $t\in E$ and all $h_{\complement E}\in \BB(\complement E,A)$. Which means that $E$ blocks $f$ with respect to the $\alpha$-core blocking concept. We proved accordingly that the $\alpha$-core of $U$ is empty. However, all the assumptions of Theorem \ref{THM} are satisfied. Then, we can assert that the weak-core of $U$ is nonempty.

\end{example}

\section{Payoff as a function of distributions of strategies : counter-example to a general form of payoffs}\label{DIS}

The question of non-vacuity of the weak-core ($\alpha$-core) under conditions similar to that ensuring the Nash equilibrium arises naturally. In fact, for normal form games with a finite set of players, the regularity conditions guaranteeing the existence of Nash equilibrium and the $\alpha-$core are of ``similar" type\footnote{More precisely, for the $\alpha-$core, we require the convexity and the compactness of strategy spaces, the continuity and the quasi-concavity of payoffs. For Nash equilibrium, we juste weaken the quasi-concavity assumption, assuming it for each player payoff only on its own strategy space. For the $\alpha$-core, the quasi-concavity of payoffs is assumed on the product of all players' strategy sets.}. We emphasize that we do not mean, in any way, any conceptual comparison between the Nash equilibrium and the $\alpha$-core (weak-core). These concepts may be viewed as ``antagonistic" from some economic and game theoretic view point. But, two mathematical problems (existence of these concepts) with similar solutions in a given situation, give raise naturally to the question of their technical comparison in a more general or different situation.

If we adopt the general payoff form $W$ above, it is shown that a pure strategy Nash equilibrium for $H$ exists iff together with some measurability condition, the action space $A_0$ is countable \citep{KhS95,KhS02,KRS}, see Theorem \ref{NAS_EQ} below. Example \ref{EX1}, below, proves that we cannot provide analogous results, even with more restrictive regularity conditions, for the weak-core, then for the $\alpha$-core too, because the weak-core contains the $\alpha$-core. In other words, this section provides some negative answer to the raised question.

When restricting $W$ to depend only on $T\times \PP(A_0)$, an existence result of the weak-core is proved in \citep{ASK11}. The example, thereafter, proves further that this result cannot be generalized directly by adding the player's action argument to the payoff functions.

Before stating this example, let us recall that a pure strategy Nash equilibrium for the game $H$ is a strategy profile $f\in \BB(T,A_0)$ satisfying~:

$$ W(t,f(t),\mu f^{-1})=\underset{a\in A_0}{\max}\; W(t,a,\mu f^{-1})\text{ a.e. on } T.$$

In the sequel $\C (A_0\times \PP(A_0))$ refers to the set of continuous real functions on $A_0\times \PP(A_0)$ endowed with its sup-norm topology and the corresponding Borel $\sigma$-field. Recall that $A_0$ is a compact metric space and $\PP(A_0)$ is the set of Borel probabilities on $A_0$ endowed with its weak$^*$ topology. 

Consider the following version of a well known existence result of pure strategy Nash equilibrium~:

\begin{theorem}\label{NAS_EQ} Assume that $W(t,\cdot,\cdot)$ is continuous for every $t\in T$. Then, the pure strategy Nash equilibrium for the game $H$ exists under the following conditions~:
\begin{itemize}
\item[(a)] the map $t\mapsto W(t,\cdot, \cdot)$ as a function from $T$ into $\C (A_0\times \PP(A_0))$ is measurable,
\item[(b)] $A_0$ is countable,
\item[(c)] $\mu$ is an atomless probability measure.
\end{itemize}
\end{theorem}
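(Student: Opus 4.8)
The plan is to recast the existence of a pure-strategy Nash equilibrium as a fixed-point problem on the compact convex metrizable set $\PP(A_0)$, combining a Kakutani--Fan--Glicksberg fixed-point argument with the convexity of the Aumann integral of a correspondence over an atomless space. The roles of the hypotheses are already visible: the continuity of $W$ and the measurability condition (a) will produce a well-behaved best-response correspondence, while the atomlessness (c) together with the countability (b) will guarantee that aggregation over the players convexifies the problem, so that a genuinely \emph{pure} equilibrium (not merely a distributional one) can be extracted.

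First I would fix an anticipated societal distribution $\nu\in\PP(A_0)$ and define the best-response correspondence
$$B(t,\nu)=\{a\in A_0 : W(t,a,\nu)=\max_{b\in A_0}W(t,b,\nu)\}.$$
Since $A_0$ is compact and $W(t,\cdot,\nu)$ is continuous, $B(t,\nu)$ is nonempty and closed; using (a) and the measurable maximum theorem, the correspondence $(t,\nu)\mapsto B(t,\nu)$ is jointly measurable, hence $B(\cdot,\nu)$ admits measurable selections by the von Neumann--Aumann selection theorem. I would then introduce the reaction correspondence
$$\Phi(\nu)=\{\mu f^{-1} : f\in\BB(T,A_0),\ f(t)\in B(t,\nu)\ \text{a.e.}\}\subset\PP(A_0),$$
whose fixed points $\nu^*\in\Phi(\nu^*)$ are, by construction, exactly the distributions induced by pure-strategy Nash equilibria: any $f$ realizing such a fixed point satisfies $f(t)\in B(t,\mu f^{-1})$ a.e., which is the equilibrium condition.

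The heart of the argument is to verify the hypotheses of a fixed-point theorem for $\Phi$ on $\PP(A_0)$. Nonemptiness follows from the existence of measurable selections. For convexity, I would represent each distribution through the countable coordinate vector $(\nu(\{a\}))_{a\in A_0}$ and view $\Phi(\nu)$ as the set of values of the integral $\int_T \delta_{f(t)}\,d\mu(t)$ as $f$ ranges over selections of $B(\cdot,\nu)$; atomlessness of $\mu$ then forces this set to be convex by the Richter--Aumann convexity theorem for integrals of correspondences. Upper hemicontinuity with closed graph would follow from the joint continuity of $W$ via Berge's maximum theorem: if $\nu_n\to\nu$ and $\eta_n\in\Phi(\nu_n)$ with $\eta_n\to\eta$, the upper hemicontinuity of $\nu\mapsto B(\cdot,\nu)$ and a limiting argument on the selections (using weak$^*$ compactness of $\PP(A_0)$) place $\eta$ in $\Phi(\nu)$. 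With $\PP(A_0)$ compact, convex and metrizable, Kakutani--Fan--Glicksberg yields a fixed point $\nu^*$, and the associated selection is the required equilibrium.

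The main obstacle I anticipate is the convexity and closed-graph verification for $\Phi$, which is exactly where both nonatomicity and countability are indispensable. The convexification by aggregation is genuinely a consequence of the convexity of the Aumann integral over an atomless space and collapses for atomic $\mu$, while the countability of $A_0$ is what keeps the induced distributions inside a space where the integral and weak$^*$-limit arguments cohere; with an uncountable action space the closed-graph step breaks down and pure equilibria need not exist, which is precisely the phenomenon motivating the theorem. Some care is also needed in the measurable dependence of $B$ on the pair $(t,\nu)$ so that selections can be chosen measurably and passed to the limit, but this is routine once the measurable maximum theorem is invoked under (a).
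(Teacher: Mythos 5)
First, a point of reference: the paper does not prove this statement at all --- it is quoted from the literature (``This theorem was obtained by \citet{KhS95}\dots Theorem 10, page 650''), so there is no in-paper proof to compare against. Your plan --- a Kakutani--Fan--Glicksberg fixed point for the reaction correspondence $\Phi(\nu)=\{\mu f^{-1}: f\in\BB(T,A_0),\ f(t)\in B(t,\nu)\ \text{a.e.}\}$ on the compact convex metrizable set $\PP(A_0)$ --- is indeed the standard architecture behind the Khan--Sun result, and you correctly locate where each hypothesis should act. But as written the two load-bearing steps are gaps rather than proofs.

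For convexity you invoke the Richter--Aumann theorem for $\int_T\delta_{f(t)}\,d\mu(t)$. That integral takes values in an infinite-dimensional space (effectively $\ell^1(A_0)$ once $A_0$ is countably infinite), and Lyapunov/Richter--Aumann convexity is \emph{false} in general in infinite dimensions; only approximate versions survive. The correct argument is elementary and is precisely where (b) and (c) both enter: given two selections $f_1,f_2$ of $B(\cdot,\nu)$ and $\lambda\in(0,1)$, partition $T$ into the countably many measurable cells $T_{a,b}=\{t:(f_1(t),f_2(t))=(a,b)\}$, use atomlessness to split each $T_{a,b}$ into a piece of measure $\lambda\mu(T_{a,b})$ on which $g=f_1$ and its complement on which $g=f_2$; then $g$ is still a selection and $\mu g^{-1}=\lambda\mu f_1^{-1}+(1-\lambda)\mu f_2^{-1}$ exactly. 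The second and more serious gap is the closed-graph step, which you reduce to ``a limiting argument on the selections.'' This is the crux of the whole theorem. A countable compact metric $A_0$ need not be discrete (e.g.\ $\{0\}\cup\{1/n\}$), so weak$^*$ convergence $\eta_n\to\eta$ does \emph{not} give $\eta_n(\{a\})\to\eta(\{a\})$, and one cannot simply pass point masses to the limit; moreover one must produce, from the limit distribution $\eta$, an actual measurable selection $f$ of $B(\cdot,\nu)$ with $\mu f^{-1}=\eta$. Establishing that the map $F\mapsto\{\mu f^{-1}: f\ \text{a selection of}\ F\}$ has closed (indeed compact) values and a closed graph for countable $A_0$ is exactly the technical content of Khan--Sun's argument, and it is the step that fails for uncountable $A_0$. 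Your sketch names the obstacle but does not surmount it, so the proposal is an accurate roadmap rather than a proof.
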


This theorem was obtained by \citet{KhS95}. It generalizes the seminal result of \citet{SCH}, where among other $A_0$ is assumed to be finite. 
If $A_0$ is not countable, the pure strategy Nash equilibrium may fail to exist \citep{KhS02,KRS}. Note that we presented here a simplified version of the original result Theorem 10, page 650 in \citep{KhS95}, where among other, the distribution argument of the payoffs is more general.

In order to link our regularity conditions assumed on payoffs (Theorem \ref{THM} above and the main result in \citet{ASK11})  to the measurability condition (a) of Theorem \ref{NAS_EQ}, let us remark that since $A_0\times \PP(A_0)$ is a compact metric space, (a) is satisfied if $W$ is a Carath\'eodory function. That is $W(t,\cdot,\cdot)$ is continuous on $A_0\times \PP(A_0)$, for every $t\in T$, and $W(\cdot,a,p)$ is measurable for every $(a,p)\in A_0\times \PP(A_0)$ (\citet{AlB}, Theorem 4.55, p. 155). If $T$ is a topological space endowed with its Borel $\sigma-$algebra and $\mu$ is a Borel probability, then,
\begin{itemize}
\item Condition (a) of Theorem \ref{NAS_EQ} is satisfied if $W$ is usc in $t$ and jointly continuous in its second and third argument.
\end{itemize}

\begin{example}\label{EX1} In this example, by the word ``blocking'' we mean the blocking concept of Definition \ref{BLD}.
Let $T=[0,1]$ endowed with its Borel $\sigma-$algebra and the Lebesgue probability measure $\lambda$ and $A_0=\{0,1,2,3\}\subset \IR$. $\PP(A_0)$ represents the set of probabilities on $A_0$. Since $A_0$ is finite, the variation norm topology on $\PP(A_0)$ coincides with the used weak (star) topology on $\PP(A_0)$.

Let $\varepsilon>0$ small enough and $E_1,E_2,E_3$ three pairwise disjoint coalitions of $[0,1]$, such that $$\lambda(E_1)=\lambda(E_2)=\lambda(E_3)=\frac{1-2\varepsilon}{3}.$$

The sets $E_i$ are represented in Figure \ref{FIG_EXA} (in every sub-figure).

Let~:

$$p_0\in \PP(A_0) \text{ satisfying } p_0(\{0\})=1. \text{ Put }P_0=\{p_0\}.$$

Put $\displaystyle\alpha=\lambda(E_i)=\frac{1-2\varepsilon}{3}$. For all $i\in \{1,2,3\}$, define the set~:

$$P_i=\{p\in \PP(A_0) : p(\{i\})\geq 2\alpha\}.$$

\begin{figure}[htbp]
\includegraphics[width=0.8\textwidth]{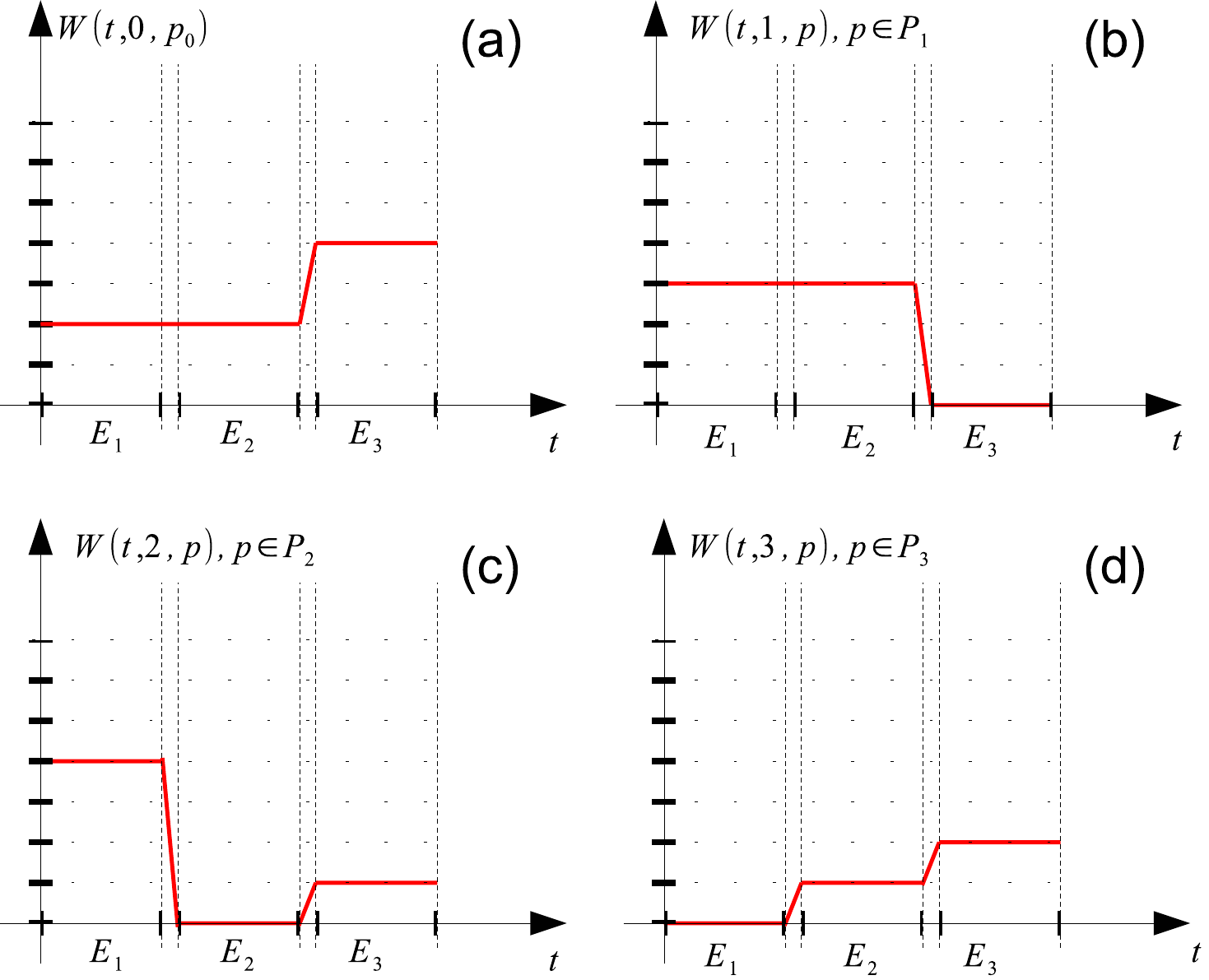}\\
 \caption{Some values of $W$}\label{FIG_EXA}
\end{figure}

Hence, the sets $P_i, i\in \{0,...,3\}$, are closed, convex and pairwise disjoint for $\displaystyle 2\alpha=2\frac{1-2\varepsilon}{3}>\frac{1}{2}$. Then, fix an arbitrary $\varepsilon$ satisfying the previous requirement.

Define $W: T\times A_0\times \PP(A_0)$ as depicted in figure \ref{FIG_EXA} on $\overset{3}{\underset{i=0}{\bigcup}} [0,1]\times\{i\}\times P_i$ and $W(t,a,p)\equiv 0$ elsewhere.

Our final goal is not the game $W$, but in order to facilitate the forthcoming analysis, let us show that its weak-core is empty. First, we can remark that $W$ possesses nice properties. In fact, it is easy to verify that :
\begin{itemize}
\item $W$ is usc on its domain and $W(\cdot,a,p)$ is continuous for every $(a,p)\in A_0\times \PP(A_0)$.
\end{itemize}

\vspace{24pt}
Observe now the following values of $W$~:

$$
\begin{array}{l|l}
  W(t,0,p_0)=\left\{ \begin{array}{l}
                                        2 \text{ if } t\in E_1\cup E_2,\\
                                        4 \text{ for } t\in E_3.
                                      \end{array}
\right.
 & \forall p\in P_1,  W(t,1,p)=\left\{ \begin{array}{l}
                                        3 \text{ if } t\in E_1\cup E_2,\\
                                        0 \text{ if} t\in E_3.
                                      \end{array}
\right. \\
\hline
 \forall p\in P_2,  W(t,2,p)=\left\{ \begin{array}{l}
                                        4 \text{ if } t\in E_1,\\
                                        0 \text{ if } t\in E_2,\\
                                        1 \text{ if } t\in E_3
                                      \end{array}
\right. &  \forall p\in P_3, W(t,3,p)=\left\{ \begin{array}{l}
                                        0 \text{ if } t\in E_1,\\
                                        1 \text{ if } t\in E_2,\\
                                        2 \text{ if } t\in E_3.
                                      \end{array}
\right.
\end{array}
$$

Let any strategy $h\in \BB(T,A_0)$. If $\lambda h^{-1}\notin P_0\cup P_1\cup P_2\cup P_3$, then $W(t,h(t),\lambda h^{-1})=0$, for every $t\in T$. Clearly $T$ blocks $h$ by playing $f_T\equiv 0$ on $T$. In fact, $f_T$ provides the payoffs of sub-figure (a). Now assume that $h$ generates the payoffs of sub-figure (a), that is $\lambda h^{-1}=p_0$. Necessarily, $h\equiv 0$ (a.e.) and then $E_1\cup E_2$ blocks $h$ by $f_{1,2}\equiv 1$ on $E_1\cup E_2$. In fact, for all $f_{\complement E_1\cup E_2}\in \BB(\complement E_1\cup E_2,A_0)$, $\lambda (f_{1,2}\co f_{\complement E_1\cup E_2})^{-1} (\{1\})\geq 2\alpha$. Hence, $E_1\cup E_2$ will obtain the payoffs of sub-figure (b). Analogously, we can show easily that, $E_1\cup E_3$ blocks, by its strategy $f_{1,3}\equiv 2$, all strategy generating the payoffs of sub-figure (b). Thereby, $E_1\cup E_3$ can ensure for its members the payoffs represented in sub-figure (c). Now the coalition $E_2\cup E_3$ can ensure for its members the payoffs represented in sub-figure (d) by playing $f_{2,3}\equiv 3$ on $E_2\cup E_3$, doing this all strategy leading to the payoffs of sub-figure (c) is blocked. In turn, the coalition $T$ blocks all strategies ensuring payoffs of sub-figure (d) by its strategy $f_T\equiv 0$ on $T$. $T$ obtains with the last blocking strategy the payoffs represented in sub-figure (a). Henceforth, we conclude that the weak-core (and then the $\alpha-$core) of the game $W$ is empty.

\vspace{24pt}
\textbf{Emptiness of the weak-core under the continuity of payoffs and their concavity with resect to the distribution argument~:}

Let $W$ defined as in figure \ref{FIG_EXA} on $\overset{3}{\underset{i=0}{\bigcup}} [0,1]\times\{i\}\times P_i$ and consider the function $\widetilde{W}$ on $[0,1]\times A_0\times \PP(A_0)$ defined as follows~:

$$ \widetilde{W}(t,a,p)=W(t,a,P_a)\left(\frac{D}{2}-d(p,P_a)\right)$$

where $D$ stands for the minimum among $d(P_i,P_j), i\neq j$. The metric $d$ is that induced by the variation norm. Since the sets $P_i$ are convex compact and pairwise disjoint, $D$ is strictly positive. We do not need the exact value of $D$.

Let $a$ be fixed. Since, by construction, $W(\cdot,a,P_a)$ is continuous, the continuity of $\widetilde{W}$ on $T\times \{a\}\times \PP(A_0)$ results obviously from the extension formula. Then, $\widetilde{W}$ is continuous on its domain.

It is easy to check, using the convexity of the sets $P_a$, $a\in A_0$, that for every $a\in A_0$, the function $p\mapsto d(p,P_a)$ is convex. If follows that $$p\mapsto W(t,a,P_a)\left(\frac{D}{2}-d(p,P_a)\right)$$
is concave for every $t$ and every $a$. Hence $\widetilde{W}(t,a,\cdot)$ is concave for every $t$ and every $a$.

\vspace{12pt}
\textbf{Emptiness of the weak-core of $\widetilde{W}$.}

Let a strategy $h\in \BB(T,A_0)$ be fixed. Put $p=\lambda h^{-1}$. Two cases can occur~:

\vspace{12pt}
1) Firstly, every $a\in A_0$ satisfies $\displaystyle d(p,P_a)\geq \frac{D}{2}$. In this case, for a.e. $t\in T$, $\displaystyle \frac{D}{2}-d(p,P_{h(t)})\leq 0$. Hence, $\displaystyle \widetilde{W}(t,h(t),p)=W(t,h(t),P_{h(t)})(\frac{D}{2}-d(p,P_{h(t)}))\leq 0$. As previously, many coalitions block $h$. For instance, $T$ blocks $h$, by playing $f_0(t)\equiv 0$. In fact, $\lambda f_0^{-1}=p_0$ and for a.e. $t\in T$,  $d(p_0, P_{f_0(t)})=d(p_0,P_0)=0$, hence every $t\in T$ receives $\displaystyle \widetilde{W}(t,0,p_0)=W(t,0,p_0)\frac{D}{2}$.

\vspace{12pt}
2) Secondly, there is $i_0\in A_0$ such that $\displaystyle d(p,P_{i_0})\leq \frac{D}{2}$. In this case, for all $\displaystyle a\in A_0\setminus \{i_0\}$, $\displaystyle d(p,P_a)\geq \frac{D}{2}$. Hence, the payoff of all player playing $h(t)\neq i_0$ is negative or null, and the payoff of a player $t$ playing $i_0$ is~: $$\widetilde{W}(t,i_0,p)=W(t,i_0,P_{i_0})(\frac{D}{2}-d(p,P_{i_0}))\leq W(t,i_0,P_{i_0})\frac{D}{2}.$$

It follows that for all $t\in T$,
$$\widetilde{W}(t,h(t),\mu h^{-1})\leq W(t,i_0,P_{i_0})\frac{D}{2}.$$
But, in the game $W$, as we seen previously, following the value of $i_0$, either the coalition $T$ or an union of two coalitions among $\{E_1,E_2,E_3\}$ blocks any strategy generating the payoffs $W(\cdot,i_0,P_{i_0})$. Such a coalition blocks also $h$ for the continuous game defined by $\widetilde{W}$. For example, if $i_0=1$, then, $E_1\cup E_3$ can ensure for all its member $t$, by playing $f_{1,3}\equiv 2$ on $E_1\cup E_3$, the payoff $\displaystyle W(t,2,P_2)\frac{D}{2}$ and for all $t\in E_1\cup E_3$ and every $f_{\complement E_1\cup E_3}\in \BB({\complement E_1\cup E_3},A_0)$,

$$\begin{array}{rl}
\widetilde{W}(t,f_{1,3}(t),\lambda (f_{1,3}\co f_{\complement E_1\cup E_3})^{-1})&=W(t,2,P_2)\frac{D}{2}\\
&\geq \left(W(t,1,P_1)+1\right)\frac{D}{2}\\
&\geq \widetilde{W}(t,h(t),\mu h^{-1})+\frac{D}{2}.
\end{array}
$$

We deduce that the weak-core of the game defined by $\widetilde{W}$ is empty. Observe however that $\widetilde{W}$ satisfies all conditions of Theorem \ref{NAS_EQ}, then we know that the set of pure  strategy Nash equilibria of $\widetilde{W}$ is non-empty.

\end{example}
\begin{remark}
The payoffs $\widetilde{W}$ constructed in the previous example satisfy nicer properties than that required for Nash equilibrium in Theorem \ref{NAS_EQ} above. In fact, $\widetilde{W}$ is continuous on its domain and concave with respect to its distribution argument. Regarding Section \ref{STR}, the existence result in \citep{ASK11}, and the classical Scarf's existence result for finite games, these properties of $\widetilde{W}$ may be intuitively expected to guarantee the non vacuity of the weak-core. Nevertheless, this example shows that they do not suffice or they are irrelevant in the present case.
\end{remark}

\section*{Acknowledgments} The author is very grateful to an anonymous referee for his relevant comments.

\section{Appendix}

In this section we give some mathematical considerations as a technical extension of section \ref{PRE}.
\subsection{Concatenating coalition strategies}
We used in this paper the following property of $\tau_{\BB(T,X)}$ (see section \ref{PRE}), which is satisfied by all linear topologies~:

\begin{itemize}
\item [(P)] Let $G:\BB(T,X)\rightarrow \IR$ be continuous (resp. upper semi-continuous) for $\tau_{\BB(T,X)}$. Let $E_i\in \B(T)$, $i$ in a finite set $I=\{1,...,n\}$ be a pairwise disjoint family such that $T=\underset{i\in I}{\cup}E_i$ up to a $\mu$-null set. Let $s:\underset {i\in I}{\prod} \BB(E_i,X)\rightarrow \BB(T,X)$ defined by $s(f_1,...,f_n)= f_1\co f_2...\co f_n$. Then $G\circ s$ is continuous (resp. upper semi-continuous).
\end{itemize}
This results from the continuity of the operation ``+'' for vector topologies (see for instance \citep{SCH71}).

Remark, in the last time, that we can replace, in (P), $X$ by $A$.

\subsection{Topologies satisfying Condition (C)}\label{EXTOP}
In this section, most of used notions and results can be found in \citep{DiU77}.

\vspace{6pt}
\textbf{1)} Assume in this example that $X$ is a separable reflexive Banach space. Let $L_\infty(\mu,X)$ (resp. $L_1(\mu,X)$) be the space of $\mu$-measurable (strongly measurable) essentially bounded (resp. Bochner integrable) functions defined from $T$ to $X$. Since $X$ is a separable Banach space, it is hereditarily separable. Then Pettis's measurability theorem states that every scalarly (weakly) measurable function (then every Borel function) is $\mu$-measurable, because the range of every function taking its values in $X$ is separable. Since $\mu-$measurable functions are Borel, we have $\BB(T,X)=L_\infty(\mu,X)$. For these considerations, we do not distinguish, thereafter between these different notions of measurability.

The finiteness of the measure $\mu$  yields $L_\infty(\mu,X)\subset L_1(\mu,X)$. Consider the embedding of $\BB(T,X)$ in $L_1(\mu,X)$.

We show in the sequel that the weak topology $\sigma(L_1(\mu,X),L_\infty(\mu,X^*))$ satisfies the condition (C). We begin by verifying that (C) is satisfied for $T$.

It is clear that $\BB(T,A)$ is a closed subset of $L_1(\mu,X)$. Indeed a norm convergent sequence of $\BB(T,A)$ has an a.e. convergent subsequence, which provides that the limit is necessarily in $\BB(T,A)$. Being a norm closed subset of $L_1(\mu,X)$ and convex, $\BB(T,A)$ is weakly closed in $L_1(\mu,X)$ as well.
Let us verify the weak compactness of $\BB(T,A)$. Since $A$ is bounded, there is a constant $M=\underset{a\in A}{\sup}\|a\|$. Hence, as $\mu$ is finite, $\BB(T,A)$ is norm bounded in $L_1(\mu, X)$, and,

$$\underset{\mu(E)\rightarrow 0}{\lim}\; \underset{f\in \BB(T,A)}{\sup} \int_E \|f\| \;d\mu\leq \underset{\mu(E)\rightarrow 0}{\lim}M\mu(E)=0.$$

Then, the set $\BB(T,A)$ is uniformly integrable. Since $A$ is convex, for every $f\in \BB(T,A)$ and $E\in \B(T)$, $\int_E f\;d\mu\in \mu(E)A$. That is, $\left\{\int_E f\;d\mu\; :\; f\in \BB(T,A)\right\}$ is a subset of $\mu(E)A$. Consequently, it is relatively compact, then relatively weakly compact too. As $X$ is separable and reflexive, so is $X^*$. It follows that both $X$ and $X^*$ have the Radon-Nikodym property. It results from the Dunford weak compactness criterion in $L_1(\mu,X)$  (\citet{DiU77}, p. 101) that $\BB(T,A)$ is relatively weakly compact. Since this set is weakly closed in $L_1(\mu,X)$, it is weakly compact. Instead of using the Dunford criterion, one can use compactness results in \citep{DIE77,ULG91}.

Verify the condition $(C)$ for any Borel set. Let $E\in \B(T)$. The set $\BB(E,A)$ is identified (see section \ref{PRE}) to the subset $\{f_E\co 0_{\complement E} : f_E\in \BB(E,A)\}$ of $L_1(\mu,X)$. As doing it for $T$, we can observe that $\BB(E,A)$ is weakly closed in $L_1(\mu,X)$. Since $\BB(E,A)\subset \BB(T,A)$, it is necessarily weakly compact.

\vspace{12pt}
\textbf{2)} Assume that $X$ is a separable Hilbert space. Among properties of $X$, we have $X^*=X$. We have already seen in example 1 that $\BB(T,X)=L_\infty(\mu,X)$. We assert that the weak$^*$ topology $\sigma(L_\infty(\mu,X),L_1(\mu,X))$ satisfies the condition (C). Begin the verification of (C) on the set $T$. Let us verify, first, the weak$^*$ closure of $\BB(T,A)$ in $L_\infty(\mu,X)$. Let for this aim $g\in L_\infty(\mu,X)\setminus\BB(T,A)$. Then, there exists a measurable subset $E\in \B(T)$ of strictly positive measure, such that $g(t)\notin A,$ for all $t\in E$. For every $a\in X\setminus A,$ let $\varepsilon>0$ such that $\mathbf{\overline{B}}(a,\varepsilon_a)\cap A=\emptyset$. Here, $\mathbf{\overline{B}}(a,r)$ (resp. $\mathbf{B}(a,r)$) stands for the closed (resp. open) ball of $X$ of radius $r\in \IR_+$ centered at $a$. The set of obtained open balls $\mathbf{B}(a,\varepsilon_a),a\in X\setminus A$, constitutes an open cover of $X\setminus A$. Since $X$ is separable metric, it is hereditarily Lindel\"of. Then, we can extract, from the previous cover, a countable subcover $\mathbf{B}(a_i,\varepsilon_i), i\in \IN$, of $X\setminus A$.
Since $E\subset\underset{i\in \IN}{\bigcup}g^{-1}(\mathbf{B}(a_i,\varepsilon_i))$ and $\mu(E)>0$, there is necessarily an index $i_0\in \IN$ such that $\mu(g^{-1}(\mathbf{B}(a_{i_0},\varepsilon_{i_0})))>0$. By the use of the Hahn-Banach separation theorem, let $x^*\in X^*$ separating strictly $\overline{\mathbf{B}}(a_{i_0},\varepsilon_{i_0})$ and $A$. Since $X$ is an Hilbert space, $x^*$ is represented by an element $x\in X$. Denote by $(\cdot,\cdot)$ the scalar product of $X$. Then, there is $\alpha>0$ such that, $$(x,y)>\alpha>(x,a),\forall y\in \mathbf{B}(a_{i_0},\varepsilon_{i_0}),\forall a\in A.$$

Put $E_{i_0}=g^{-1}(\mathbf{B}(a_{i_0},\varepsilon_{i_0}))$ and $\chi_{i_0}$ its characteristic function. Then $\chi_{i_0} x\in L_1(\mu,X)$ and, $$\int_T(\chi_{i_0} x,g)\; d\mu>\mu(E_{i_0})\alpha>\int_T(\chi_{i_0} x,f)\; d\mu,\forall f\in \BB(T,A).$$
Hence the weak$^*$ open set $\{h\in L_\infty(\mu,X) : (\chi_{i_0} x,h)>\mu(E_{i_0})\alpha\}$ contains $g$ and does not intersect $\BB(T,A)$, which means that $\BB(T,A)$ is weak$^*$ closed in $L_\infty(\mu,X)$. Since $\BB(T,A)$ is bounded for the esssup-norm in $L_\infty(\mu,X)$, it results, from the Banach-Alaoglu Theorem, that  $\BB(T,A)$ is weak$^*$ compact in $L_\infty(\mu,X)$.

Verify (C) for an arbitrary Borel sets. Let $E\in \B(T)$ and denote $L_1(\mu_E,X)$ (resp. $L_\infty(\mu_E,X)$) the set all Bochner integrable (resp. $\mu-$measurable essentially bounded) functions defined from $E$ to $X$. $\mu_E$ stands for the induced measure on $E$. By extending, as in Section \ref{PRE}, all the functions in $L_1(\mu_E,X)$ (resp. $L_\infty(\mu_E,X)$) by $0$ on $\complement E$,  $L_1(\mu_E,X)$ (resp. $L_\infty(\mu_E,X)$) can be seen as a subspace of $L_1(\mu,X)$ (resp. $L_\infty(\mu,X)$). By the same notations and a similar reasoning we have $\BB(E,X)=L_\infty(\mu_E,X)$. As above we obtain easily that $\BB(E,A)$ is weak$^*$ compact in $L_\infty(\mu_E,X)$.

Remark in the last time that the function $f_E\mapsto f_E/\!/0_{\complement E}$ is continuous from $L_\infty(\mu_E,X)$ to $L_\infty(\mu,X)$ both of them endowed with its weak$^*$ topology.
It results that $\BB(E,A)$ is compact for the induced topology on $L_\infty(\mu_E,X)$ from $\sigma(L_\infty(\mu,X),L_1(\mu,X))$.

\subsection{The { weak-core} for usc payoffs in infinite dimension.}
Thereafter, we give the modifications to operate in the Scarf non-vacuity result \citep{SCA71} in order to handle infinite dimensional strategy spaces and bounded usc quasi-concave payoffs {to prove the non-vacuity of the weak-core for a game with a finite set of players}: the assumptions listed in the proof of Theorem 1.

Let $N=\{1,...,n\}$. For every $i\in N$, consider 
\begin{itemize}
\item[(C1)] $X_i$ a convex compact subset of a Hausdorff topological vector space. 
\end{itemize}
Put $X=\underset{i\in N}{\prod}X_i$. Let $S\subset N$. $X_S$ stands for the product $\underset{i\in S}{\prod}X_i$, $v_S$ refers to an element of $X_S$ and $-S=N\setminus S$.

For every $i\in N$,
\begin{itemize}
\item [(C2)]$u_i:X\rightarrow \IR$ is upper semi-continuous,  quasi-concave and bounded from below.
\end{itemize}
Consider the game~: $$ G=(N,X,\{u_i,i\in N\})$$

\citet{SCA71} showed that $G$ has a nonempty $\alpha$-core by assuming that the functions $u_i$ are quasi-concave and continuous, the sets $X_i$ are convex compact subsets of finite dimensional Euclidean spaces. However, his proof remains valid under the weaker conditions (C1) and (C2) above {to prove the existence of the weak-core}. Indeed, Scarf proves that the characteristic function form game $G_c$, defined below, has a nonempty core and to each element in the core of $G_c$ corresponds an element in the $\alpha$-core of $G$ { in case of continuous payoffs. For ucs payoffs, we show at the end that the weak-core remains nonempty}.

The associated characteristic function form game is defined by~:  $$G_c=(N,V),$$
where, for every nonempty $S\subset N$,
$$V(S)=\{y\in \IR^N, \exists v_S\in X_S, u_i(v_S,v_{-S})\geq y_i,\forall v_{-S}\in X_{-S},\forall i\in S\}$$

In order to have a nonempty core (elements of $V(N)$ not belonging to the interior of $V(S)$ for any $S\subset N$), the game $G_c$ needs to satisfy~:
\begin{itemize}
\item[(a)] for every $S\subset N$, $V(S)$ is closed and nonempty,
\item[(b)] for every $S\subset N$, if $y\in V(S)$ and $y'\in \IR^N$ satisfies $y'_i\leq y_i$ for every $i\in N$, then $y'\in V(S)$,
\item[(c)] $V(N)$ is bounded from above.
\item[(d)] $G_c$ is balanced.
\end{itemize}
Prove that $G_c$ satisfy (a)-(c) under the assumptions (C1) and (C2).
The condition (b) is obviously satisfied and (c) results from the upper semicontinuity of the functions $u_i,i\in N$, and the compactness of $X$. Let us prove (a). Fix $S\subset N$. The non-emptiness of $V_S$ results from boundedness of $u_i,i\in N$. Remark that $y\in V(S)$ if and only if there exists $v_S\in  X_S$, such that $u_i(v_S,v_{-S})-y_i\geq 0$, for all $i\in S$, for all $v_{-S}\in X_{-S}$. That is, $$y\in V(S)\Leftrightarrow \exists v_S\in X_S,\underset{v_{-S}\in X_{-S}}{\inf} \; \underset{i\in S}{\min}\left\{u_i(v_S,v_{-S})-y_i\right\} \geq 0 $$

Define the function $H:X_S\times\IR^N$ by~: $$H(v_S,y)=\underset{v_{-S}\in X_{-S}}{\inf} \underset{i\in S}{\min}\left\{u_i(v_S,v_{-S})-y_i \right\}$$
Then, $$y\in V(S)\Leftrightarrow \exists v_S\in X_S, H(v_S,y)\geq 0$$

It is clear that $H$ is upper semicontinuous. Let $y^n$, $n\in \IN$, be a sequence of $V(S)$ converging to $y$. Let $v_S^n$, $n\in \IN$, a corresponding sequence in $X_S$ satisfying $H(v_S^n,y^n)\geq 0$, for every $n\in \IN$.

Since $X_S$ is compact, as a net, $\{v_S^n\}_{n\in \IN}$ has a convergent sub-net, denote it $\{v_S^{\eta(w)}\}_{w\in W}$ and let $v_S$ be its limit.
Then, the sub-net $\{y^{\eta(w)}\}_{w\in W}$ of $\{y^n\}_{n\in \IN}$ converges to $y$ and the net $\{(v_S^{\eta(w)},y^{\eta(w)})\}_{w\in W}$ converges to
$(v_S,y)$ in the product $X_S\times \IR^S$. It follows~:

$$H(v_S,y)\geq\underset{w\in W}{\lim \sup}H(v_S^{\eta(w)},y^{\eta(w)})\geq 0$$

Which means $y\in V(S)$ and then, $V(S)$ is closed.

The remaining arguments of Scarf need not be rewritten. They are also true under our assumptions. For instance, the arguments showing the balancedness of the game $G_c$ (condition (d)) do not require topological considerations, they work with only convexity assumptions taken into account by the quasi-concavity of the functions $u_i$. {Now, take $y$ in the core of $G_c$ and let $\bar v \in X$ such that $u_i(\bar v)\geq y_i$, for all $i\in N$. Then, $\bar v$ is an element of the weak-core of $G$. Otherwise, there is $S$ blocking $\bar v$ with some $v_S\in X_S$. Then, there exists $\varepsilon>0$ such that, 
$$u_i(v_s,v_{-S})>u_i(\bar v)+\varepsilon\geq y_i+\varepsilon,\forall v_{-S}\in X_{-S},\forall i\in S.$$
Hence, $y$ belongs to the interior of $V(S).$ Note that the parameter $\varepsilon$ is needed in the previous formula to conclude that $y$ is in the interior of $V(S)$, since the upper semi-continuity cannot guarantee, for instance, that $\inf_{v_{-S}} u_i(v_s,v_{-S})>y_i$ for every $i\in S$, and the large inequality do not support the needed conclusion in order to provide a contradiction.
\\
Observe that for continuous payoffs, the weak-core coincides with the $\alpha-$core.
}

\bibliographystyle{plainnat}
\bibliography{AskBibMSS3}

\end{document}